\newtheorem{theorem}{Theorem}
\def\vec#1{\mathbf{#1}} 
\newcommand\FDunderbraces [1] []
 		\ldots \underbrace{1\ldots \underbrace{1\ldots _2}_{k_1}}_{k_m}
\newcommand\ONEunderbrace [1] []
\newcommand\plainA [1] []
\newcommand\aWithTilde [1] []
\newcommand\sumOfCombinations[1][]
\title{Bibliography management:\\\texttt{thebibliography} environment}
\author{Share\LaTeX}
\date{}
\theoremstyle{definition}
\newtheorem{definition}{Definition}[section]
\begin{document}

\author{Valeri Aronov}
\title{Fast Derivatives for Multilinear Polynomials} 
\date{\today{}} 
\maketitle{} 
\pagenumbering{arabic}

\begin{abstract}
	The article considers linear functions of many ($n$) variables - multilinear polynomials (MP) \cite{multilinearPolynomial}. The three-steps evaluation is presented that uses the minimal possible number of floating point operations for non-sparse MP at each step. The minimal number of additions is achieved in the algorithm for fast MP derivatives ($FMPD$) calculation. The cost of evaluating all first and second derivatives approaches to the cost of MP evaluation itself with a growing $n$. The $FMPD$ algorithm structure exhibits similarity to the Fast Fourier Transformation (FFT) algorithm.
\end{abstract}
\tableofcontents{} 

\section {Introduction}  

\paragraph{}
This is a linear function of $n$ variables:

\begin{equation} \label{eq:MP}
     a(\vec{x})=\sum_{i=0}^{2^{n}-1} r_i \cdot \prod_{\substack{k=1\\i=(i_n...i_k...i_1)_{2}}}^{n}x_{k}^{i_{k}}
     \text{, where } \{ \vec{x}, \vec{r} \in \mathbb{R}^n \}
\end{equation}

\paragraph{}
It is called a multilinear polynomial (MP) \cite{multilinearPolynomial}.

\paragraph{}
For example (for n=2):
$$a(\vec{x})=r_{00_{2}}+r_{01_{2}} \cdot x_{1}+r_{10_{2}} \cdot x_{2}+r_{11_{2}} \cdot x_{1} \cdot x_{2}$$

\paragraph{}
The binary notation for $i$ is natural for the fast MP derivatives ($FMPD$) algorithm presented in this article. The transfer function of linear analog electric circuits has MPs of circuit parameters in its numerator and denominator with coefficients being complex polynomials  of frequency  and $\vec{x}$ - values of the circuit parameters. Some software systems for linear circuit analysis generate and use symbolic (analytical) parameters presentation of the transfer functions for fast multiple transfer function evaluations \cite{symbolicTransferFunctions}. It makes it possible to build a symbolic presentation for for derivatives of transfer functions as well. $FMPD$ algorithm was first introduced in \cite{avtoreferat} for the linear circuit parameters optimization.

\section {Algorithm for Multilinear Polynomial Derivatives Calculation}

\paragraph{}

This article uses ${q_i(\vec{x})}$ designation:

\begin{equation} \label{eq:Qs}
	q_i(\vec{x})=r_i \cdot \prod_{\substack{k=1\\i=(i_n...i_k...i_1)_{2}}}^{n}x_{k}^{i_{k}}
\end{equation}

Their derivatives then are:
	$${({q_i(\vec{x})})}_{x_k}^{'}=q_{i}(\vec{x})/x_k$$
for all $i_k=1$. $q_i(\vec{x})$ does not depend on those $x_k$ which have $i_k=0$ and corresponding derivatives are equal $0$.

The following definitions of full and partial sums of $q_i(\vec{x})$ are used here:
	$$a_0=a(\vec{x})$$
	$$a_{0\ldots01_2}=\sum_{\substack{i=0\\ i=(i_n \ldots i_{2}1)_2}}^{2^{n}-1} q_i$$
	$$a_{0\ldots10_2}=\sum_{\substack{i=0\\i=(i_n \ldots i_{3}1 i_1)_2}}^{2^{n}-1} q_i$$

For two $n$-bit integers $j = (j_1, \ldots, j_n)_2$ and $i = (i_1, \ldots, i_n)_2$ we write 
$j \succ i$ if their binary digits satisfy $j_\nu \ge i_\nu$, $\nu =1, \ldots, n$.
Then 
	\begin{equation} \label{eq:PartialSums}
		a_i = \sum_{j \succ i}  q_j
	\end{equation}
	


	$$a_{1 \dots 1} = q_{1 \dots 1}$$

Given integers $1  \le k_1< \ldots< k_m \le n$ we denote by $t(k_1, \ldots, k_m)$ the 
$n$-bit integer having 1 at positions $k_1, \ldots, k_m$ and $0$ at all other positions. Then MP derivatives are:

	\begin{equation} \label{eq:As}
	     {({a(\vec{x})})}_{x_{k_1} \ldots x_{k_m}}^{(m)}=a_{t(k_1, \ldots, k_m)}(\vec{x})/\prod_{j=1}^{m} x_{k_j}
	\end{equation}


The task of MP derivatives calculation is: given $\vec{r}$ and $\vec{x}$ and using MP presentation \eqref{eq:MP} calculate all derivatives as in \eqref{eq:As} including the evaluation of MP itself. MPs in this article are considered to be 'non-sparse'. It means that either $r_i\neq0$ for all $i$ or the savings in MP and its derivatives calculation due to omission of $r_i=0$ are negligible.

Execute the task in three steps:
\begin{enumerate}
	\item Calculate all items as in $\eqref{eq:Qs}$
	\item Calculate all $a_i$ as in $\eqref{eq:PartialSums}$
	\item Calculate all derivatives as in $\eqref{eq:As}$
\end{enumerate}

The following algorithm calculates all products of $x_i$ in step 1:

\begin{codebox}
	\Procname{$\proc{GetProducts} (x, products)$}
	\li $k = 1$
	\li $products[0] = 1$
	\li $products[1] = x[1]$
	\li \For $i = 2$ \To sizeof(x)
	\li $\{$   \Indentmore 
	\li 	$products[k+1] = x[i]$
	\li 	\For $j = 1$ \To $k$    \Indentmore 
	\li		$products[k+j+1] = x[i]*x[j]$ \End
	\li	$k = k*2 + 1$	\End 
	\li$\}$
\end{codebox}

Note that an item $i$ of $products$ in \proc{GetProducts} and the items $i$ of $q_i(\vec{x})$ and ${r_i}$ arrays in \eqref{eq:Qs} relate to this product:
	$$\prod_{\substack{k=1\\i=(i_n...i_k...i_1)_{2}}}^{n}x_{k}^{i_{k}}$$

Note also that $n+1$ items in $products$ do not require multiplications.

Getting the results in \eqref{eq:Qs} after having all the products is straightforward and requires $2^n - 1$ multiplications more.

Let us apply 'divide and conquer' algorithm to the Step 2. Split all $q_j$ for $j=(j_p \ldots j_1)_2$ into two groups $q_{{(0j_{p-1} \ldots j_1)}_2} $ and $q_{{(1j_{p-1} \ldots j_1)}_2}$. Assume that the Step 2 for ($p-1$)  is executed for both halves and the results are $\plainA$ and $\aWithTilde$. Then the following operations produce the results for $p$:

	$$ a_{{(1j_{p-1} \ldots j_1)}_2} = \aWithTilde$$
	$$ a_{{(0j_{p-1} \ldots j_1)}_2} = \plainA + \aWithTilde$$

\proc{GetPartialSums} implements this algorithm:

\begin{codebox}
	\Procname{$\proc{GetPartialSums} (q2a, xSize)$}
	\li $addendPositionsDifference$ = 1
	\li $clusterPositionsDifference$ = 2
	\li $q2aSizeMinusOne$ = sizeof($q2a$) - 1
	\li $lastSumPositionLimitInSilo = q2aSizeMinusOne$
	\li \For $iSilo$ = 1 \To $xSize$
	\li $\{$   \Indentmore 
	\li 	\For $i$ = 0 \By $clusterPositionsDifference$ \To $lastSumPositionLimitInSilo$
	\li 	$\{$   \Indentmore 
	\li 		\For $j$ = 0 \To $addendPositionsDifference$ - 1    \Indentmore 
	\li			$q2a[i+j] = q2a[i+j] + q2a[i+j+addendPositionsDifference]$	\End    \End
	\li 	$\}$
	\li	$lastSumPositionLimitInSilo = q2aSizeMinusOne - clusterPositionsDifference$
	\li	$addendPositionsDifference = clusterPositionsDifference$
	\li	$clusterPositionsDifference = 2*clusterPositionsDifference$ 	\End
	\li$\}$
\end{codebox}

It starts with $q2a$ containing all $q_i$ and finishes with $q2a$ containing all $a_i$. $xSize = n$ and $q2aSizeMinusOne = 2^n - 1$ here. Once the memory for $q_i$ is allocated this algorithm does not require any additional memory except of several scalar variables.

\paragraph{}
This figure presents the  \proc{GetPartialSums} diagrams for $n=2, 3, 4$:

 \includegraphics[width=\linewidth]{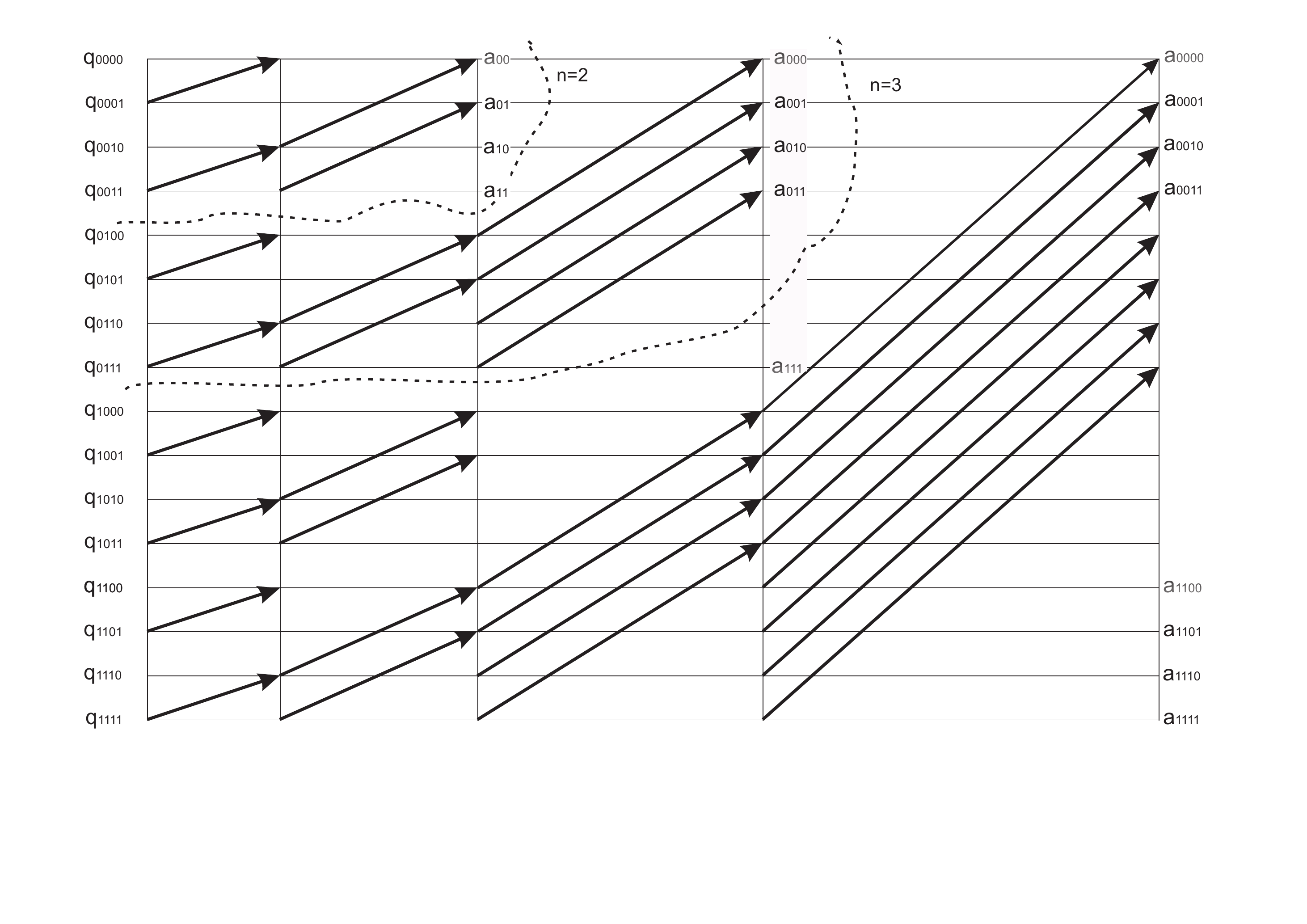}

$q_i$ input data are located along the left vertical line. $a_i$ output data are located along the right vertical line. An arrow represents an addition where addends are taken at the levels of the arrow's end points and the sum is placed into the location of the addend the arrow points to. The additions are executed silo-by-silo from the left silo to the right one. The top-left parts of the figure outlined by the broken lines represent diagrams for $n=2$ and $n=3$.

As it is proven in the next section the algorithm implemented by \proc{GetPartialSums} requires a minimal number of floating point additions. The adjective 'fast' is appropriate here then.
\theoremstyle{definition}
\begin{definition}
	The algorithm implemented by \proc{GetPartialSums} is called a Fast Multilinear Polynomial Derivatives ($FMPD$) algorithm.
\end{definition}

Having $products$ calculated in Step 1 and $q2a$ calculated in Step 2 the last Step 3 gets derivatives of MP dividing each corresponding $q2a[i]$ containing $a_i$ by associated $products[i]$.

\section {Floating Point Operation Number in Multilinear Polynomial Derivatives Calculation}
\label{Complexity}

\paragraph{}

Consider floating point operation numbers in Steps 1-3.

Step 1 requires ${(2^n - n -1)}$ multiplications at least to produce ${(2^n - n -1)}$ different products.  \proc{GetProducts} achieves this minimum because it produces one product per multiplication.
Overall Step 1 requires 
	$$ 2^n - n - 1+ 2^n - 1 = 2^{n+1} - n - 2$$
multiplications. 

Step 3 requires $(2^n - 2)$  divisions because $a_0$ and $a_{2^{n}-1} = r_{2^{n}-1}$ do not require divisions. It achieves the minimal floating point complexity because it produces the rest of $(2^n - 2)$ derivatives - one derivative per division.

Step 2 $FMPD$ requires $ n \cdot 2^{n-1}$ additions. Its asymptotic complexity is the greatest out of all Steps. The following statement  holds:

\begin{theorem}
	The $FMPD$ algorithm uses the minimal number of floating point additions for \eqref{eq:PartialSums} evaluation out of all algorithms using additions only.
\end{theorem}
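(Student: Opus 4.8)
\emph{Proof proposal.} The plan is to recast \eqref{eq:PartialSums} as a statement about monotone arithmetic circuits and then argue by induction on $n$. (Throughout I identify an index with the set of positions of its $1$‑bits.) Any ``additions only'' algorithm is a straight‑line program whose sole operation is a binary addition $u\leftarrow v+w$. With no subtractions available, every value produced is a nonnegative integer combination of the inputs $q_0,\dots,q_{2^n-1}$ whose coefficient vector can only grow along the program; since each output $a_i$ has a $0/1$ coefficient vector by \eqref{eq:PartialSums}, one may delete useless nodes and assume that every node $v$ carries a $0/1$ vector, i.e.\ a subset $S_v\subseteq\{0,\dots,2^n-1\}$, where the inputs are singletons, each addition is a \emph{disjoint} union $S_v=S_{v'}\sqcup S_{v''}$, and the output $a_i$ is the principal filter $\{\,j:j\succ i\,\}$. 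Thus the theorem is equivalent to: assembling all $2^n$ filters from singletons by disjoint unions costs at least $n\cdot 2^{n-1}$ unions. Note $n\cdot 2^{n-1}$ is exactly the number of edges of the Boolean cube $Q_n$; this is the combinatorial heart of the matter, since for every cube edge $\{T,T\cup\{k\}\}$ both endpoints lie in the filter of the index $T$, so the inputs $q_T$ and $q_{T\cup\{k\}}$ must be merged somewhere in the program.

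\paragraph{}
Write $g(n)$ for the minimum number of additions and assume inductively $g(n-1)=(n-1)2^{n-2}$. Fix a coordinate $k$ and split the indices into ``low'' ($j_k=0$) and ``high'' ($j_k=1$). Let $\mathcal A$ be the set of additions whose support contains no low index; a child of such a gate again contains no low index, so $\mathcal A$ together with the high inputs is a sub‑circuit, and its outputs are exactly the $a_i$ with $i_k=1$, which realize the $(n-1)$‑dimensional instance of \eqref{eq:PartialSums} on the $2^{n-1}$ high inputs; hence $|\mathcal A|\ge g(n-1)$. Next set all high inputs to $0$: each gate now evaluates to the sum of the $q_j$ in its support with $j_k=0$, the outputs $a_i$ with $i_k=0$ become the $(n-1)$‑dimensional transform of the low inputs, and after contracting the resulting pass‑through gates the surviving additions are exactly the gates $\mathcal B$ both of whose children still contain a low index; hence $|\mathcal B|\ge g(n-1)$. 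A gate in $\mathcal A$ has no low index in its support whereas a gate in $\mathcal B$ does, so $\mathcal A\cap\mathcal B=\varnothing$ and already $N\ge 2g(n-1)$.

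\paragraph{}
It remains to exhibit $2^{n-1}$ further additions disjoint from $\mathcal A\cup\mathcal B$. A gate $v$ outside $\mathcal A\cup\mathcal B$ has a low index in $S_v$ and has one child $v_1$ whose support is entirely high; I will call these the \emph{merging} gates. To count them, fix an output $a_i$ with $i_k=0$ and follow a path in the program from $a_i$ down to the input $q_{i\cup\{k\}}$, whose index satisfies $i\cup\{k\}\succ i$ and so lies in $S_{a_i}$: the path starts at a node with a low index in its support and ends at an all‑high node, hence it contains a first step at which the last low index disappears, and the gate at the top of that step is a merging gate. Showing that one can make these $2^{n-1}$ choices pairwise distinct then yields $N\ge 2g(n-1)+2^{n-1}=n\cdot 2^{n-1}$, which is exactly what \proc{GetPartialSums} achieves; the base cases $n=0,1$ are immediate.

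\paragraph{}
The distinctness claim in the last step is where I expect the real work to be. Because one addition may feed many outputs, the naive map $a_i\mapsto(\text{merging gate on its path})$ need not be injective: a single merging gate $v=v_1+v_2$ with $v_1$ all‑high lies on an admissible path of \emph{every} output $a_i$ whose filter contains $S_v$. Making the count rigorous therefore calls either for a canonical rule for selecting the path and the gate, so that $i$ is recoverable from the gate, or for an amortized charging argument — for instance spreading the unit weight of each cube edge over the addition that merges it together with the additions used to build that addition's two operands, and checking that no gate is charged a total exceeding $1$ — the subtle point being to keep the bookkeeping tight enough to land on $n\cdot 2^{n-1}$ rather than on a weaker constant‑factor estimate.
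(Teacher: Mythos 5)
Your argument is, in substance, the paper's own proof: your $\mathcal A$ is its first set of additions (both addends assembled from high inputs), your $\mathcal B$ is its third set (the additions that survive when the high inputs are set to $0$), your merging gates are its second set, and the two $g(n-1)$ bounds come from the same restriction and substitution arguments. The one genuine gap is exactly the one you flag yourself: the claim that there are at least $2^{n-1}$ \emph{distinct} merging gates. You should know that the paper is no more careful here than you are --- it simply asserts that each output with $i_k=0$ ``has to use at least one addition from the 2nd set, hence there are at least $2^k$'' of them, which ignores precisely the sharing phenomenon you describe (a single merging gate can sit in the sub-DAG of many outputs). So you have correctly isolated the only step of the induction that needs real work; what is missing is that you stopped short of doing it. (Incidentally, the paper's $n=2$ base case is also argued loosely --- ``four numbers need four additions'' overlooks that $a_{11}=q_{11}$ needs none --- whereas your $n=0,1$ base cases really are immediate.)

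The gap closes if you commit to the specific path you already name, so no amortized charging is needed. Work in the cleaned-up circuit where every gate's support is a disjoint union of its children's supports. For an output $a_T$ with $k\notin T$, the index $T\cup\{k\}$ lies in the filter of $T$, and disjointness means that at every gate above the input $q_{T\cup\{k\}}$ inside the sub-DAG of $a_T$ this index sits in exactly one child's support; hence there is a \emph{unique} path from $q_{T\cup\{k\}}$ up to the gate computing $a_T$. Let $\phi(T)$ be the first gate on this path whose support contains a low index: its path-child is all-high and its other child is not, so $\phi(T)$ is a merging gate, and moreover $T\cup\{k\}\in S_{\phi(T)}\subseteq\{J: J\supseteq T\}$. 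Now if $\phi(T)=\phi(T')=v$, then $T\cup\{k\}$ and $T'\cup\{k\}$ both lie in $S_v$; from $S_v\subseteq\{J:J\supseteq T\}$ and $k\notin T$ you get $T'\supseteq T$, and symmetrically $T\supseteq T'$, so $T=T'$. Thus $\phi$ is injective, there are at least $2^{n-1}$ merging gates, and $N\ge 2g(n-1)+2^{n-1}=n\cdot 2^{n-1}$ as required. The point that makes this work --- and that your ``every output whose filter contains $S_v$'' worry overlooks --- is that you trace from the \emph{minimal} high element $q_{T\cup\{k\}}$ of the filter, so membership of $T'\cup\{k\}$ in a sub-sum of $a_T$ already forces $T'\supseteq T$ by the antisymmetry of the order $\succ$. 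With this inserted, your write-up is a complete (and more rigorous) version of the paper's argument.
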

\begin{proof}

	The theorem proposition about the algorithms under consideration contains the 'additions only' restriction. It is an open question if the proposition holds for algorithms using additions and subtractions.

	Let us call an algorithm to be a direct one if it uses for each $a_i$ only the addends in \eqref{eq:PartialSums} and only once each. The direct algorithms never add an $q_{0i_{k} \ldots i_0}$ item to $a_{1i_{k} \ldots i_0}$, for example. If a non-direct algorithm does it the $q_{0i_{k} \ldots i_0}$ item has to be subtracted from $a_{0i_{k} \ldots i_0}$ as well. Thus 'additions only' restriction is equivalent to 'the direct algorithms only' restriction.
	The $FMPD$ algorithm uses $n \cdot 2^{n-1}$ additions. Below we prove that this is the minimal number of additions required for \eqref{eq:As}. 
\paragraph{}
	The theorem is valid for $n=2$. Indeed, four different numbers $a_0$, $a_1$, $a_{10}$ and $a_{11}$ are produced by four additions used by $FMPD$. One needs at least one operation to produce one number. The $FMPD$'s number of additions for $n=2$ is $n \cdot 2^{n-1}=4$ and is equal to the minimal required.
\paragraph{}
Assume that the theorem in valid for $k$ and prove that then it is valid for $k+1$ as well. 
\paragraph{}
Let us split all additions into three sets:
\begin{itemize}[noitemsep]
\item those with both addends being $q_{1i_{k-1} \dots i_0}$ (lower half of $q_i$ in the diagram) or the sums of them (the 1st set);
\item those with exactly one addend being $q_{1i_{k-1} \dots i_0}$ or  the sums of them (the 2nd set);
\item the rest of additions (the 3rd set).
\end{itemize}

Note that these three sets do not overlap and their superset includes all additions used in \eqref{eq:As}.

Let us introduce  $b_j$:
	$$b_{j_{k-1} \ldots j_0} = a_{1i_{k-1} \dots i_0}$$
Any algorithm evaluating $a_{1i_{k-1} \dots i_0}$ does not use additions where the addends are $q_{0i_{k-1} \ldots i_0}$ or their sums according to  \eqref{eq:PartialSums}. Otherwise the use of subtractions is necessary to remove $q_{0i_{k-1} \ldots i_0}$ from the result. The first set execution obtains $b_j$ and according to our assumption contains $k \cdot 2^{k-1}$ additions at least. 

Consider the 2nd set of additions. Each $a_{0i_{k-1} \dots i_0}$ has to use at least one additions from the 2nd set hence there are at least $2^k$ additions because the number of $a_{0i_{k-1} \dots i_0}$ items is  $2^k$.

Let us prove that the 3rd set  has to have at least $k \cdot 2^{k-1}$ additions. If there is an algorithm $A$ for $(k+1)$ that needs lesser number of additions consider its application to $\vec{x^*} = \{0, x_k, \ldots, x_1\}$. Only additions from 3rd set remain because each addition from first two sets will have at least one $0$ addend. $A$ application to $\vec{x^*}$ reduces original target dimension from $(k+1)$ to $k$ and it can't use less than $k \cdot 2^{k-1}$ additions because it contradicts our assumption about the minimal number of additions for $k$.

Finally, we have minimal numbers of additions for all three sets. These sets do not overlap and their superset is a set of all additions used for \eqref{eq:PartialSums} evaluation. Thus the total number of additions in the three sets is:
	$$A_{k+1} \geq k \cdot 2^{k-1} + k \cdot 2^{k-1} + 2^k = (k+1) \cdot 2^k$$

The right hand side of the above is exactly the number of additions used by $FMPD$ for $k+1$ meaning that it uses the minimal number of the required additions.
\end{proof}

An $FMPD_l$ analog of $FMPD$ for derivatives up to $l$th order ($l<n$) can be produced by omitting in $FMPD$ the additions that do not contribute to the required partial sums. In the last silo the omitted additions will be those which arrows point to $a_i$ that have the number of 1s in the binary presentation of $i$ greater than $l$. In the last but one silo the omitted arrows in the top cluster will be those pointing to $a_i$ with the number of 1s in $i$ greater than $l+1$. The lower cluster repeats the the mission pattern of the top cluster.
For an $iSilo$ the omitted arrows in the top cluster will point to $a_i$ with the number of 1s in $i$ greater than $l + 1$. The rest of cluster in this cluster repeat the mission pattern of the first one.
The first silo with omissions is $iSilo_{first} = l+2$.
This figure presents the $FMPD_l$ diagram for $n=5, l=2$ (the omitted additions are presented with the broken line):
	\includegraphics[width=\linewidth]{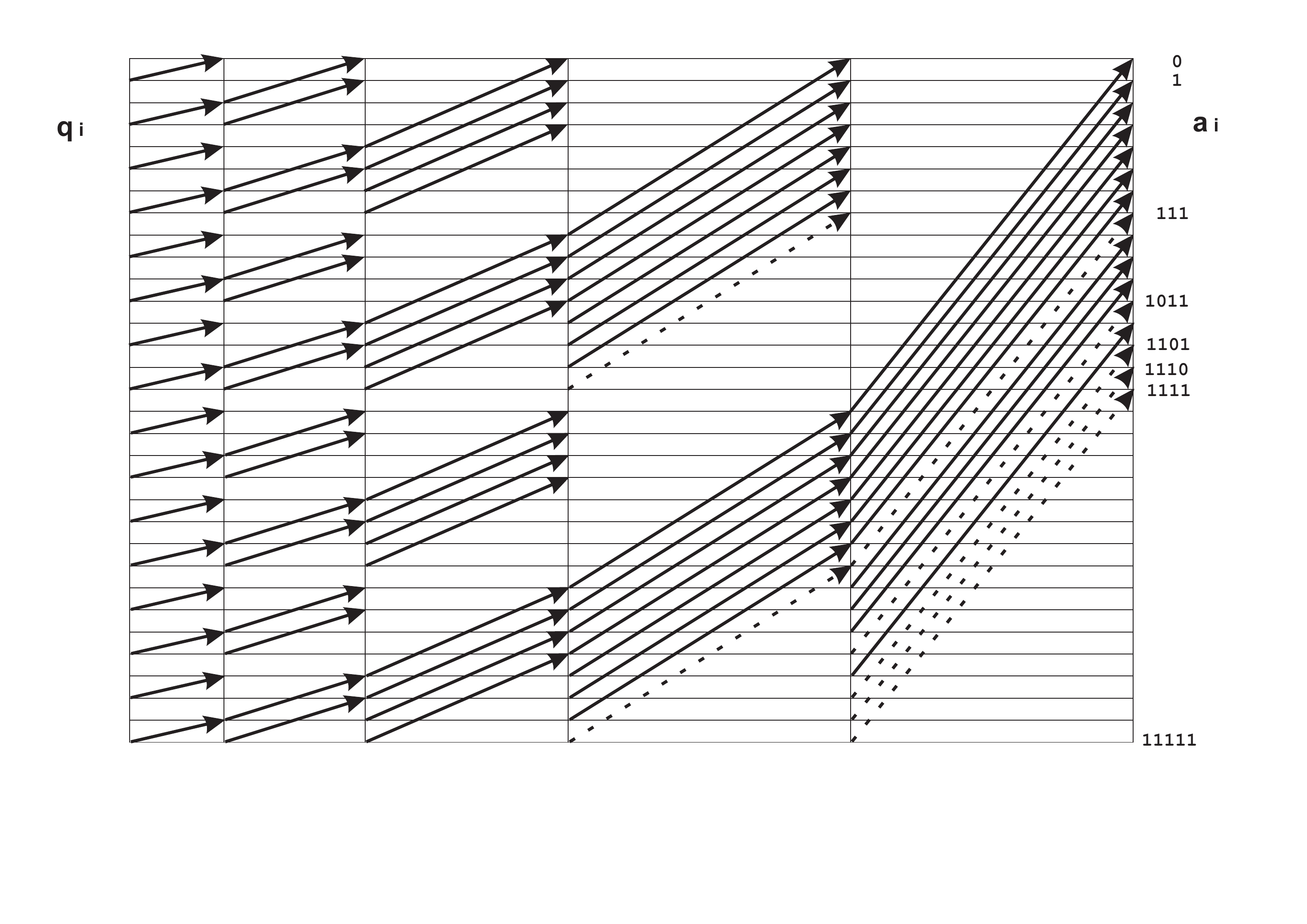}

Note that the above omission can be precalculated for the usage in the internal loop of \proc{GetPartialSums} and have a negligible effect on the overall execution time.

The number of floating point additions for \eqref{eq:PartialSums} evaluation by $FMPD_l$ \cite{avtoreferat}:
\begin{equation} \label{eq:AFMPDl}
	A_{FMPD_l} = 
	\begin{cases}
		(l+1) \cdot 2^{n-1} + \sum\limits_{k=l+2}^n \bigg[ \sum\limits_{i=0}^l  \binom{k-1}{i} \bigg] \cdot 2^{n-k}
					& \text{if $l < n-1$},\\
		n \cdot 2^{n-1} 	& \text{if $l = n-1$}.
	\end{cases}
\end{equation}

An important particular case of \eqref{eq:AFMPDl} is the evaluation of the polynomial and its gradient. Let us estimate the relative increase in  the number of additions required for obtaining the gradient in comparison with the MP evaluation itself:
	$$  R_1(n) = [A{FMPD_1} - A_0(n)]/A_0(n) = [2\cdot 2^{n-1} + \sum \limits_{k=3}^{n} k \cdot 2^{n-k} - A_0(n)]/A_0(n) $$
where $ A_0(n) = \sum \limits_{k=1}^n 2^{n-k} $.

Using $ \sum\limits_{i=1}^{\infty}  2^{-i} = 1 $ we get
	$$ \lim_{n\to\infty} A_0(n) = 2^n $$ 
and using $ \sum\limits_{i=1}^{\infty} i \cdot 2^{-i} = 2 $ \cite{Quora1} we get
	$$ \lim_{n\to\infty} R_1(n) = 1/8 $$

Thus the cost of evaluating all partial sums for first derivatives approaches to only 1/8 of MP evaluation cost with a growing number of variables. The cost per first derivative is negligible in comparison with MP evaluation.

Similarly, using $ \sum\limits_{i=1}^{\infty} i^2 \cdot 2^{-i} = 6 $ \cite{Quora2} we get
	$$ \lim_{n\to\infty} R_2(n) = 1 $$
Thus the cost of $n(n+1)/2$ first and second derivatives evaluation approaches the cost of MP evaluation itself. The cost per derivative is negligible in comparison with MP evaluation.

The number of additions in the algorithm calculating partial sums in \eqref{eq:As} for each $i$ separately  is:
	$$ A_{naive} =\sumOfCombinations \cdot (2^{n-i} - 1) = 2^n \cdot \sumOfCombinations \cdot 2^{-i} - 2^n = 3^n - 2^n$$

The particular variant of the binomial formula was used here:
	$$\sumOfCombinations \cdot u^i = (1+u)^n$$

The number of additions in the algorithm calculating partial sums in \eqref{eq:As} for each $i$ separately for the up to and including $l$ derivatives is:
	$$ A_{naive_l} = \sum\limits_{i=0}^{l} \binom{n}{i} \cdot (2^{n-i} -1) $$

The number of additions in the algorithm calculating partial sums in \eqref{eq:As} for each $i$ separately for the up to and including $2$ derivatives is:
	$$ A_{naive_2} = 1 + n \cdot 2^{n-1} + \binom{n}{2} \cdot 2^{n-2} = 1+n \cdot 2^{n-1} + 2^{n-2} \cdot n!/[2! \cdot (n-2)!] = 1+ n \cdot 2^{n-1} + n^2 \cdot 2^{n-3} - n \cdot 2^{n-3}$$

The asymptotic complexity expressions of these algorithms are:
\begin{align*}
	& O_{FMPD}(n) = n \cdot 2^n \\
	& O_{FMPD_{1}}(n) = 2^n \\
	& O_{naive_{n}}(n) =3^n \\
	& O_{naive_{2}}(n) = n^2 \cdot 2^n
\end{align*}

$FMPD$ has a substantially lower complexity than the naive algorithm. The exponential complexity of $FMPD$ and $FMPD_l$ allow their practical usage only for relatively low $n$ values. Still it is noticeably faster than its naive counterpart. For example, for $n=8$ $FMPD$ uses 1024 additions vs. 6305 used by the naive algorithm. The table below compares number of additions in $FMPD$ and $FMPD_2$ against their naive counterparts:

\begin{center}
\begin{tabular}{|c|c|c|c|c|c|c|c|c|c|c|} 
\hline
	$\boldsymbol{n}$ & \textbf{2} & \textbf{3} & \textbf{4} & \textbf{5} & \textbf{6} & \textbf{7} & \textbf{8} & \textbf{9} & \textbf{10} & $O(n)$ \\
\hline
	$\boldsymbol{A_{naive_{n}}/A_{FMPD}}$ & 1.25 & 1.58 & 2.0  & 2.6 & 3.5 & 4.6 & 6.2 & 8.3 & 11.3 & $1.5^n$\\ 
\hline
	$\boldsymbol{A_{naive_{2}}/A_{FMPD_{2}}}$ &  1.25 & 1.58 & 2.0  & 2.4 & 2.8 & 3.5 & 4.2 & 5.0 & 5.5 & $n$ \\ 
\hline
\end{tabular}
\end{center}

Now consider parallelization opportunities for the floating point operations in Steps 1-3.

Step 1. The body of the internal loop of \proc{GetProducts} can be split across $m$ processors. If a single processor asymptotic complexity is $O_{1}(n)$ then the one for $m$ processors is $O_{m}(n)=O_{1}(n)/m$. 
After the products in \eqref{eq:Qs} are evaluated the multiplications by $r_{i}({\vec{r})}$ can be done in $M_{m}(n) = \lfloor M_{1}(n)/m \rfloor + 1$ operations per processor, where $M_{1}(n)$ and $M_{m}(n)$ are the number of multiplications for a single processor and each of $m$ processors.

Step 2. As it is illustrated by $FMPD$ diagram all additions in a silo can be executed in parallel because each addition has its own separate addends and the result. Then $A_{m}(n) = \lfloor A_{1}(n)/m \rfloor + 1$, where $A_{1}(n)$ and $A_{m}(n)$ are the number of additions for a single processor and each of $m$ processors.

Step 3. All multiplications may be executed in parallel. It requires $M_{m}(n)=\lfloor M_{1}(n)/m \rfloor + 1$ multiplications.

\section {Fast Multilinear Polynomial Derivatives and FFT}

The simplified $FFT$ algorithm \cite{Aho} has the following diagram:

 \includegraphics[width=\linewidth]{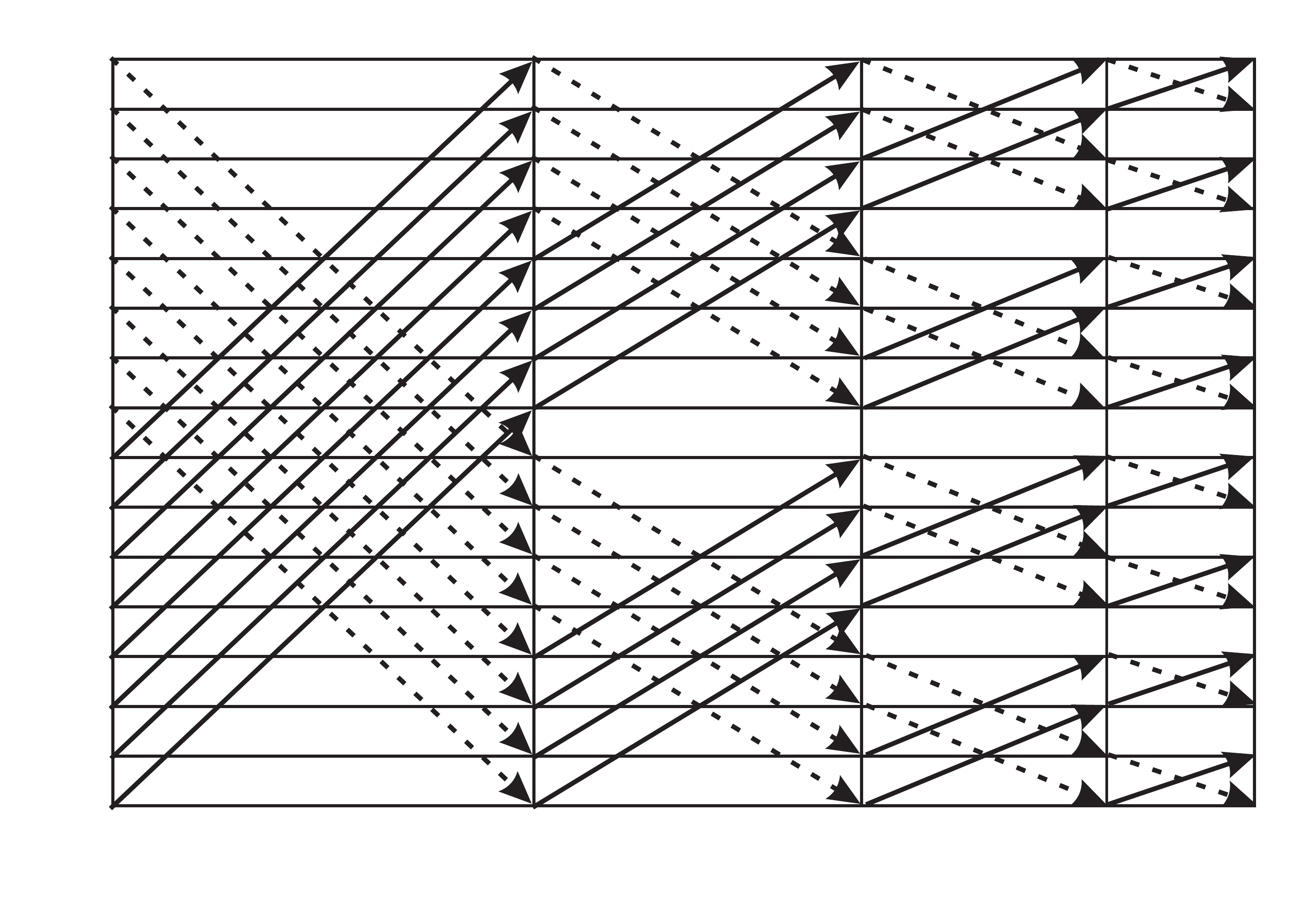}
where the solid line arrows mean the same as ones in $FMPD$ diagram and the broken line arrows turn into void operations if $\omega =0$. Comparison of $FFT$ ($\omega =0$) with the $FMPD$ diagram reveals their structural similarity. Note that a somewhat simpler MP derivatives problem allowed to achieve the minimal possible number of operations whereas only a complexity lower bound for FFT \cite{fftComplexity} is established.

\section*{Acknowledgements}
The author would like to thank Joris van der Hoeven and Igor Shparlinski for constructive criticism of the article and advise.

\medskip

\raggedright

\end{document}